\documentclass[12pt]{amsart}
\usepackage{a4wide}
\usepackage[dvips]{epsfig}
\usepackage{amscd}
\usepackage{amssymb}
\usepackage{amsthm}
\usepackage{amsmath}
\usepackage{latexsym}

\theoremstyle{plain}
\newtheorem{thm}{Theorem}[section]
\theoremstyle{plain}

\newtheorem{prop}[thm]{Proposition}

\theoremstyle{definition}

%
% Definition of assumptions
{%
\setcounter{enumi}{0}

\begin{enumerate}}%
{\end{enumerate} }
%
% Definition of assumptions
{%
\setcounter{enumi}{0}

\begin{enumerate}}%
{\end{enumerate} }

\numberwithin{equation}{section}
 \allowdisplaybreaks

\title[Hedging Quanto Options]
 {On the sensitivity analysis of energy quanto options.}

\date{\today}

\begin{document}
\author{ Rodwell Kufakunesu }

\address{Department of Mathematics and Applied Mathematics
, University of Pretoria, 0002, South Africa}

\email{rodwell.kufakunesu@up.ac.za}

\author{ Farai Julius Mhlanga }

\address{Department of Mathematics and Applied Mathematics,
University of Limpopo, Private bag X1106, Sovenga, 0727, South Africa}

\email{farai.mlanga@ul.ac.za}

%\end{keywords}
\keywords{energy option, futures, Malliavin derivatives, HJB framework }

%\date{June 15, 2008 and, in revised form, , }

%\dedicatory{}

%\commby{Daniel J. Rudolph}

%%% ----------------------------------------------------------------------

\begin{abstract}
In recent years there has been an advent of quanto options in energy markets.
The structure of the payoff is rather a different type from other markets since it is written as a product of an underlying energy index and a measure of temperature.
In the HJM framework, by adopting the futures energy dynamics, we use the Malliavin calculus to derive the delta and the cross-gamma expectation formulas. This work can be viewed as an extension of the work done, for example by Benth \emph{et al.} \cite{BenthLange}.
\end{abstract}

%%% ----------------------------------------------------------------------
\maketitle
%%% ----------------------------------------------------------------------
\section{Introduction}
The paper investigates hedging of the energy quanto options using the Malliavin
Calculus approach by Nualart \cite{Nualart}. This method has shown that it outperforms the finite difference approach when it comes to discontinuous payoffs, see Benth \emph{et al.} \cite{BenthGroth}. Quanto options in the equity market differ
from those designed for energy markets by the structure of their payoffs. The
energy quanto option has a product payoff which is structured in such a way that
it takes advantage of the high correlation between energy consumption and
certain weather conditions thereby enabling price and weather risk to be
controlled simultaneously, refer Caporin \emph{et al.} \cite{Pres}. On the other hand, the equity quanto has
a normal structure. Ho \emph{et al.} \cite{Ho} noted that quanto options, in general,
are better hedgers than a simple combination of plain vanilla options. In energy
markets, they give exposure to the volumetric risk input of weather conditions
on energy prices, see Zhang \cite{Zhang}.

Heath \emph{et al.} \cite{Heath} introduced the so-called Hearth-Jarrow-Merton (HJM) approach in fixed income markets where the dynamics of the forward rates are directly specified, see Benth \emph{et al.} \cite{BenthElectricity}. The fact that most contracts in energy markets are settled in futures and forward, the framework was later on in 2000 adopted in this market by Clewlow and Strickland \cite{Clewlow}.
There have been few papers in literature analysing the hedging of this quanto option product.
Benth \emph{et al.} \cite{BenthLange} recently studied the pricing and hedging of
quanto energy options in this framework basing on both the
spot and the futures products as the underlying processes. The authors derived
analytic expressions for the energy delta, the temperature delta, and the
cross-gamma hedging. If the payoff functions are discontinuous then their
hedging method fails. In this paper, we derive the so-called derivative free
hedging formulae using a much more powerful tool: the Malliavin calculus. The
Malliavin calculus technique has been used by many authors in hedging equity
derivative products, see for example,
(Benth \emph{et al.} \cite{BenthElectricity}, Benth \emph{et al.} \cite{BenthDahl}, Di Nunno \emph{et al.} \cite{DNunnoOksendal}, Fourni\'{e} \emph{et al.} \cite{Fournie1,Fournie2}, Karatzas \emph{et al.} \cite{KaratsasOkone}, Mhlanga \cite{Mhlanga}). In all
these references the methods were not applied in a product payoff structure such
as ours and with an interval delivering period. Our results can be viewed as an
extension of Benth \emph{et al.} \cite{BenthLange}.

We only focus on deriving the delta and the cross-gamma hedging expectation formulae.
The paper is organised as follows. Section 2, reviews the structure of the
quanto option as in Benth \emph{et al.} \cite{BenthLange}. We present the futures asset dynamics general diffusion models under the Heath-Jarrow-Morton (HJM) framework.  In Section 3, we review the
necessary tools from Malliavin calculus to be applied in our proofs. In Section 4, the `Greeks' that is, the delta and the cross-gamma are derived in the independent case, when the correlation value is zero and in Section 5 the correlation case, the Greek formulae are obtained. The residual risk is also discussed. Some examples are provided in Section 6. Finally, in Section 7, we conclude.
\section{The Contract Structure and Pricing of Quanto Options}
In this section, we review the commodity quanto pricing, see, for example, Benth \emph{\emph{et al.}} \cite{BenthLange}
and in particular, we follow their notations therein. The quanto has a payoff function $S$ given
by:
$$
S=(T_{var}-T_{fix})\times(E_{var}-E_{fix}),
$$
where $T_{var}$ represents some variable temperature measure, $T_{fix}$
represents some fixed temperature measure, and $E_{var},E_{fix}$ are the
variable and fixed energy price, respectively. To avoid the downside risk on
this quanto contract it has been reported in Benth \emph{et al.} \cite{BenthLange} that for hedging
purposes, it is reasonable to buy a contract with optionality. In the
temperature market of Chicago Mercantile Exchange (CME), for contracts are
written on the aggregated amount of heating-degree days (HDD) and cooling-degree
days (CDD). The temperature index is used as the underlying. The HDD (similarly
the CDD) over a measurement period $[\tau_1,\tau_2]$ is defined by:

\begin{equation}
I_{[\tau_1,\tau_2]}:=HDD(t)=\max(c-T(t),0),
\end{equation}
where $T(t)$ is the mean temperature on day $t$, and $c$ is the pre-specified
temperature threshold (eg., $65^0$F or $18^0$C). If the contract is specified as
the accumulated HDD over $[\tau_1,\tau_2]$ we have :
 \begin{equation}
I_{[\tau_1,\tau_2]}:=\sum_{t=\tau_1}^{\tau_2}HDD(t)=\sum_{t=\tau_1}^{\tau_2}
\max(c-T(t),0),
\end{equation}
analogously for CDD.\\
We note that quanto options have a payoff function that is a function of two underlying assets, temperature and price. We focus on quanto options with payoff function $f(E,I)$ where $E$ is an index of the energy price and $I$ is an index of temperature. The energy index $E$ over a period $[\tau_1,\tau_2]$ is given as
an average spot price by:
\begin{equation}\label{energy}
E=\frac1{\tau_2-\tau_1}\sum_{u=\tau_1}^{\tau_2}S_u,
\end{equation}
where $S_u$ is the energy spot price. In addition, we assume that the temperature index is defined by
\begin{equation}\label{tempera}
I=\sum_{u=\tau_1}^{\tau_2}g(T_u),
\end{equation}
where $T_u$ denotes the temperature at time $u$ and $g$ some function. For example, for a quanto option involving the HDD index, we choose $g(x)=\max(x-18,0)$. To price the quanto option exercised at
the time $\tau_2$, its arbitrage-free price at time $t\leq\tau_2$ becomes
\begin{equation}\label{OptionQ}
 C_t=e^{-r(\tau_2-t)}\mathbb E_t^{\mathbb
Q}\left[f\left(\frac1{\tau_2-\tau_1}\sum_{u=\tau_1}^{\tau_2}S_u,~\sum_{u=\tau_1}
^{\tau_2}g(T_u)\right)\right],
\end{equation}
where $r>0$ represents a constant risk-free interest rate and $\mathbb
E_t^{\mathbb Q}$ is the expectation operator with respect to $\mathbb Q$,
conditioned on the market information at time $t$ given by the filtration
$\mathcal F_t$. Following \cite{BenthLange}'s argument on the relationship
between the quanto option and the futures contract on the energy ad temperature indexes $E$ and $I$, we note that the price at time $t\leq \tau_2$ of a futures contract written on some energy price with delivery period $[\tau_1,\tau_2]$ is given by
\begin{equation}\label{energyfuture}
 F^E(t;\tau_1,\tau_2)=\mathbb E_t^{\mathbb
Q}\left[\frac1{\tau_2-\tau_1}\sum_{u=\tau_1}^{\tau_2}S_u\right]\,.
\end{equation}
At $t=\tau_2$ we have:
\begin{equation}
F^E(\tau_2;\tau_1,\tau_2)=\frac1{\tau_2-\tau_1}\sum_{u=\tau_1}^{\tau_2}S_u\,.
\end{equation}
This means that the future prices is exactly equal to what is being delivered. Applying the same argument to the temperature index, with price dynamics denoted by $F^I(t,\tau_1,\tau_2)$, the quanto option price $C_t$ can be written as:
\begin{eqnarray}
\label{quantooptionprice}
C_t&=&e^{-r(\tau_2-t)}\mathbb E_t^{\mathbb
Q}\left[f\left(\frac1{\tau_2-\tau_1}\sum_{u=\tau_1}^{\tau_2}S_u,~\sum_{u=\tau_1}
^{\tau_2}g(T_u)\right)\right]\\\nonumber
&=&e^{-r(\tau_2-t)}\mathbb E_t^{\mathbb
Q}\left[f\left(F^E(\tau_2;\tau_1,\tau_2),F^I(\tau_2;\tau_1,\tau_2)\right)\right]
\,.
\end{eqnarray}
The advantage of writing the quanto option price as in
Eq.\eqref{quantooptionprice} is that futures are traded financial assets. Let $\bar{K}_E,\bar{K}_I$ denote the high strikes for the energy  and temperature indexes, respectively and $\underline{K}_E,\underline{K}_I$ denote the low strikes for the energy  and temperature indexes, respectively. Now we can define the payoff function
$$
p(F^E(\tau_2;\tau_1,\tau_2),F^I(\tau_2;\tau_1,\tau_2),\bar{K}_E,\bar{K}_I,
\underline{K}_E,\underline{K}_I):=p$$ so that
\begin{eqnarray*}
p&=&\alpha\times[\max(F^E(\tau_2;\tau_1,\tau_2)-\bar{K}_E,0)
\times \max(F^I(\tau_2;\tau_1,\tau_2)-\bar{K}_I,0)\\
\quad&+&\max(\underline{K}_E-F^E(\tau_2;\tau_1,\tau_2),0)\times\max(\underline{K}
_I-F^I(\tau_2;\tau_1,\tau_2),0)]\,,
\end{eqnarray*}
where $\alpha$ is the contractual volume adjustment factor. As in Benth et al. \cite{BenthLange}, for illustration
purposes, we consider the product call structure with the volume adjuster $\alpha$ normalized to $1$, that is, we consider the price of an option with the following payoff function:
\begin{equation}
\label{payoffforward}
\hat{p}=\max(F^E(\tau_2;\tau_1,\tau_2)-\bar{K}_E,0)\times
\max(F^I(\tau_2;\tau_1,\tau_2)-\bar{K}_I,0)\,,
\end{equation}
and the quanto option at time $t$ is given by:
\begin{equation*}
 C_t=e^{-r(\tau_2-t)}\mathbb E_t^{\mathbb
Q}\left[\hat{p}(F^E(\tau_2;\tau_1,\tau_2),F^I(\tau_2;\tau_1,\tau_2),
\bar{K}_E,\bar{K}_I)\right]\,.
\end{equation*}

\subsection{The Asset Dynamics}

We use the HJM risk-neutral dynamics of the forward
contract at time $t$. Consider the general diffusion futures model under the
risk-neutral measure $\mathbb{Q}$ be given as :

\begin{eqnarray}
 dF^E(t;\tau_1,\tau_2) &=&\sigma_E(t, F^E(t;\tau_1,\tau_2))dW^E(t),\\
 dF^I(t;\tau_1,\tau_2) &=& \sigma_I(t,F^I(\tau;\tau_1,\tau_2))dW^I(t)\,.
\end{eqnarray}
with $F^E(0;\tau_1,\tau_2)>0$ and $F^I(0;\tau_1,\tau_2)>0$ where $\sigma_E$, $\sigma_I$ are deterministic volatilities and $W^E$, $W^I$ are correlated Brownian motions with a correlation parameter $\rho\in(-1,1)$. The process $F^E$ is the option price of a future contact written on some energy price and $F^I$ is the option price of a future contact written on some temperature price.\\
Given an arbitrary $W^E$, there exists $\widetilde{W}^I$ which is independent of $W^E$ and $W^I$. Then, we can express $W^I$ as follows
\begin{equation}
W^I=\rho W^E+\sqrt{1-\rho^2}\widetilde{W}^I.
\end{equation}
Thus we have
\begin{eqnarray}\label{Gen HJB}
 \quad\quad dF^E(t;\tau_1,\tau_2) &=&\sigma_E(t, F^E(t;\tau_1,\tau_2))dW^E(t),\\
 \quad\quad dF^I(t;\tau_1,\tau_2) &=& \rho\sigma_I(t,F^I(t;\tau_1,\tau_2))dW^E(t)+\sigma_I(t,F^I(t;\tau_1,\tau_2))\sqrt{1-\rho^2} d\widetilde{W}^I(t)\,.
\end{eqnarray}
The above equations can be written in matrix form as follows:
$$
\left(
  \begin{array}{c}
    dF^E \\
    dF^I \\
  \end{array}
\right)
=
\left(
  \begin{array}{cc}
   \sigma_E(t, F^E)& 0 \\
 \rho  \sigma_I(t, F^I) & \sigma_I(t,F^I)\sqrt{1-\rho^2} \\
  \end{array}
\right)
\left(
  \begin{array}{c}
    dW^E\\
    d\widetilde{W}^I \\
  \end{array}
\right)\,.
$$
We can write this as:
\begin{equation}\label{HJBvector}
 d\bar{F}=a(t,F^I,F^E)d\bar{W}\,,
\end{equation}
where the matrix $a:([0,\tau_2]\times\mathbb{R}^2)\rightarrow \mathcal{M}_2$,
satisfies the growth and Lipschitz conditions. We can write \eqref{HJBvector} as:
\begin{equation}\label{HJBInt}
  \bar{F}=\bar{F_0}+\int_{0}^{t}a(t,F^I,F^E)d\bar{W}\,,~~~\bar{F_0}>\textbf{0}\,.
\end{equation}
Given this dynamics the quanto option becomes :
\begin{equation}\label{generoption}
 C_t=\mathbb E^{\mathbb Q}\left[\tilde{g}\left(\int_{0}^{\tau_2}\sigma_E(t,F^E)dW^E\right)\tilde{h}\left(\int_{0}^{\tau_2}\sigma_I(t,F^I)dW^E,
\int_{0}^{\tau_2}\sigma_I(t,F^I)d\widetilde{W}^I\right)\right]\,,
\end{equation}
where $\tilde{g}(x)=(x-K^E)^+$ and $\tilde{h}(x,y)=(\rho
x+\sqrt{1-\rho^2}y-K^I)^+$ are measurable functions.
\section{A Primer on the Malliavin Derivative Properties}
In this section, we review the necessary Malliavin derivative properties.
These properties were also highlighted in Fourni\'{e} \emph{et al.} \cite{Fournie1} and Mhlanga \cite{Mhlanga} and the proofs can be found in Nualart
\cite{Nualart}.
Let $\{W(t),~0\leq t\leq\tau_2\}$ be an $n$-dimensional Brownian motion defined on a
complete probability space $(\Omega, \mathcal{F},\mathbb{F},\mathbb Q)$.
Let $S$ denote the class of random variables of the form
\begin{equation*}
  F=f\left(\int_{0}^{\tau_2}h_1(t)dW(t),\cdots,\int_{0}^{\tau_2}h_n(t)dW(t)\right)\,,\quad
f\in C^{\infty}(\mathbb{R}^n)\,,
\end{equation*}
where $h_1,\cdots, h_n\in L^2([0,\tau_2])$.

%\subsubsection*{Property P1}
 For $F\in S$, the Malliavin derivative $DF$ of $F$ is defined as the process
$\{D_tF,~t\in[0,\tau_2]\}$ in $L^2([0,\tau_2])$ is defined by :
\begin{equation*}
  D_tF=\sum_{i=1}^{n}\frac{\partial f}{\partial
x_i}\left(\int_{0}^{\tau_2}h_1(t)dW(t),\cdots,\int_{0}^{\tau_2}h_n(t)dW(t)\right)h_i(t)\,,
\quad t\geq 0~~a.s.
\end{equation*}
On $L^2([0,\tau_2])$ define the norm as :
\begin{equation*}
 ||F||_{1,2}:=\left(\mathbb{E}^{\mathbb Q}|F|^2+\mathbb{E}^{\mathbb Q}[\int_0^{\tau_2}|D_tF|^2dt]\right)^{
\frac12}\,.
\end{equation*}
The chain rule holds for the Malliavin derivative in the following form.
\subsubsection*{Property P1}
 Let $F=(F_1,\ldots,F_n)\in \mathbb D^{1,2}$ and let $\varphi :
\mathbb R^n\rightarrow \mathbb R$ be a continuously differentiable
function with bounded partial derivatives. Then $\varphi(F)\in
{\mathbb D}^{1,2}$ and
\begin{equation}\label{6}
D_t\varphi(F)=\sum_{i=1}^n\frac{\partial \varphi}{\partial
x_i}(F)D_tF_i,~~t\geq0~~~~~a.s.
\end{equation}

\subsubsection*{Property P2}
Let $\{X_t,~t\geq0\}$ be an $\mathbb R^n$ valued It$\hat{\text{o}}$ process whose dynamics are governed by the stochastic differential equation
\begin{equation}
dX_t=b(X_t)dt+\sigma(X_t)dW_t,
\end{equation}
where $b$ and $\sigma$ are supposed to be continuously differentiable functionals with bounded derivatives and $\sigma(x)\neq0$ for all $x\in\mathbb R^n$. Let $\{Y_t,~t\geq0\}$ be the associated first variation process given by the stochastic differential equation
\begin{equation}
dY_t=b^{\prime}(X_t)Y_tdt+\sum_{i=1}^n\sigma_i^{\prime}(X_t)Y_tdW_t^i,~~~~Y_0=I_n,
\end{equation}
where $I_n$ is the identity matrix of $\mathbb R^n$, primes denote derivatives and $\sigma_i$ is the $i$-th column vector of $\sigma$. The the process $\{X_t,~t\geq0\}$ belongs to $\mathbb D^{1,2}$ and its Malliavin derivative is given by
 \begin{equation}
D_rX_t=Y_tY_r^{-1}\sigma(X_r)1_{\{r\leq t\}}, ~~r\geq0~~a.s.,
\end{equation}
which is equivalent to
\begin{equation}
Y_t=D_rX_t\sigma^{-1}(X_r)Y_r1_{\{r\leq t\}}~~~~~~a.s.
\end{equation}%Let $u$ be a stochastic process such that $u\in L^2([0,\tau_2]\times \Omega) $.
%Then $u\in Dom(\delta)$, for all $F\in\mathbb{D}^{1,2}$ we have
%\begin{equation*}
%|\mathbb{E}\left(\int_0^{\tau_2}D_tFu(t)dt\right)|\leq C(u)||F||_L^2.
%\end{equation*}
%Here, $Dom(\cdot)$ represents the domain of.
The Malliavin derivative has an adjoint operator called Skorohod integral (also
known as the divergence operator $\delta$). We shall denote the domain
of the adjoint operator $\delta$ by Dom($\delta$).
\subsubsection*{Property P3}
Let $u\in L^2(\Omega\times [0,\tau_2])$. Then $u$ belongs to the domain
 $\text{Dom}(\delta)$ of $\delta$
 if for all $F\in \mathbb D^{1,2}$ we have
  \begin{equation}\mid  \mathbb E\left[\langle{D}F,u\rangle_{L^2(\Omega)}\right]\mid=
   \mid \mathbb E\left[\int_0^{\tau_2}D_tFu(t)dt\right]\mid\leq c\parallel
F\parallel_{L^2(\Omega)}\end{equation}
 where $c$
is some constant depending on $u$.
 If $u$ belongs to Dom$(\delta)$, then
\begin{equation}\delta(u)=\int_0^{\tau_2}u_t\delta
 W_t\end{equation} is the element of $L^2(\Omega)$ such that the integration by parts
 formula holds:
\begin{equation}\label{8}
  \mathbb E\left[\left(\int_0^{\tau_2} {D}_tF u_tdt\right)\right]=\mathbb E[F\delta(u)]~
  ~~\text{for all}~~~F\in\mathbb D^{1,2}.
 \end{equation}

 An important property of the Skorohod integral $\delta$ is that its
domain Dom$(\delta)$ contains all adapted stochastic processes
which belong to $L^2(\Omega\times [0,\tau_2])$. For such processes the
Skorohod integral $\delta$ coincides with the It$\hat{\text{o}}$
stochastic integral.
\subsubsection*{Property P4}
 If $u$ is an adapted process belonging to
$L^2(\Omega\times[0,\tau_2])$, then
\begin{equation}\label{ds12}\delta(u)=\int_0^{\tau_2}u(t)dW_t.\end{equation}

Further, if the random variable $F$ is ${\mathcal F}_{\tau_2}$-adapted and
belongs to $\mathbb D^{1,2}$ then, for any $u$ in $Dom(\delta)$,
the random variable $Fu$ will be Skorohod integrable.
\subsubsection*{Property P5}
 Let $F$ belongs to $\mathbb D^{1,2}$ and $u\in Dom(\delta)$ such
that $ \mathbb E[\int_0^{\tau_2}F^2u_t^2dt]<\infty$. Then
$Fu\in{Dom}(\delta)$ and
 \begin{equation}
\delta(Fu)=F\delta(u)-\int_0^{\tau_2}{D}_tFu_tdt,
\end{equation}
whenever the right hand side belongs to $L^2(\Omega)$. In
particular, if $u$ is moreover adapted, we have
\begin{equation}\label{9}
\delta(Fu)=F\int_0^{\tau_2}u_tdW_t-\int_0^{\tau_2}D_tFu_tdt.
\end{equation}

%The operator $D$ is closable as an operator from $L^p(\Omega)$ to
%$L^p(\Omega,\mathcal{H})$, for any $p\geq 1$ and the normed space
%$(\mathbb{D}^{1,2},||\cdot||_{1,2})$ is a Banach space.
%
%
%The Malliavin derivative has an adjoint operator called Skorohod integral (also
%known as the divergence operator $\delta$).

%\subsubsection*{Property P2}
%Let $u$ be a stochastic process such that $u\in L^2([0,\tau_2]\times \Omega) $.
%Then $u\in Dom(\delta)$, for all $F\in\mathbb{D}^{1,2}$ we have
%\begin{equation*}
%|\mathbb{E}\left(\int_0^{\tau_2}D_tFu(t)dt\right)|\leq C(u)||F||_L^2.
%\end{equation*}
%Here, $Dom(\cdot)$ represents the domain of.
%\subsubsection*{Property P3}
%If $u\in Dom(\delta)$ then $\delta(u)\in L^2(\Omega)$ characterised by
%integration by parts formula. For all $F\in\mathbb{D}^{1,2}$, we have
%\begin{equation*}
%\mathbb{E}(F\delta(u))=\mathbb{E}\left(\int_0^{\tau_2}D_tFu(t)dt)\right).
%\end{equation*}
%\subsubsection*{Property P4}
% For $\mathbb{F}$-adapted, $u\in L^2([0,\tau_2]\times \Omega) $, we have $u\in
%Dom(\delta)$ the
%Skorohod integral and the It\^o integral coincide:
% \begin{equation*}
%\delta(u)=\int_0^{\tau_2}u(s)dW(s).
%\end{equation*}
%\subsubsection*{Property P5}
% For $\mathbb{F}$-adapted and belongs to $\mathbb{D}^{1,2}$ then for any $u\in
%Dom(\delta)$, one has
% \begin{equation*}
%\delta(Fu)=F\delta(u)-\int_0^{\tau_2}D_tFu(t)dt\,,
%\end{equation*}
%whenever the right hand side above belongs to $L^2([0,\tau_2]\times \Omega)$.
\section{The Independent Case}
From the diffusion stochastic differential equation \eqref{Gen HJB} with
$\rho=0$, consider the following HJM
risk-neutral dynamics of the forward contract at time $t$. We call this the `independent case'.
Let the future price processes under the risk-neutral measure $\mathbb{Q}$ be
given as
\begin{equation*}
dF^i(t;\tau_1,\tau_2)=\sigma_i(t;\tau_1,\tau_2)F^i(t;\tau_1,\tau_2)dW^i(\tau),
~~F^i(0;\tau_1,\tau_2)>0,
\end{equation*}
for $E,I=i$. The function $F^i(0;\tau_1,\tau_2)$ represents today's forward
price. We call this, the independent case since $\rho=0$.  Explicitly this can
be written as:
\begin{eqnarray*}
F^E(\tau_2;\tau_1,\tau_2)&=&F^E(0;\tau_1,\tau_2)\exp\left(-\frac12\int_0^{\tau_2
}\sigma_E^2(u;\tau_1,\tau_2)du+\int_0^{\tau_2}\sigma_E(u;\tau_1,
\tau_2)dW^E(u)\right)\\
F^I(\tau_2;\tau_1,\tau_2)&=&F^I(0;\tau_1,\tau_2)\exp\left(-\frac12\int_0^{\tau_2
}\sigma_I^2(u;\tau_1,\tau_2)du+\int_0^{\tau_2}\sigma_I(u;\tau_1,
\tau_2)dW^I(u)\right)\,,\\
\end{eqnarray*}
where $W^E$ and $W^I$ are Brownian motions. Here,
$\int_0^{\tau_2}\sigma_i^2(\tau;\tau_1,\tau_2)d\tau<\infty$ meaning $\tau\mapsto
F^i(\tau;\tau_1,\tau_2)$ is a martingale.
Introduce $g:\mathbb{R}\mapsto \mathbb{R}$ and $h:\mathbb{R}\mapsto \mathbb{R}$
measurable functions. The payoff structure of a quanto option on the forwards
with maturity at time $\tau_2$ given by
\begin{equation}
\label{generoption}
 C=\mathbb{E}^{\mathbb Q}[ g(F^E(\tau_2;\tau_1,\tau_2 ))h(F^I(\tau_2;\tau_1,\tau_2 ))]\,,
\end{equation}
where $g(x)=(x-K^E)^+$ and $h(x)=(x-K^I)^+$ and the risk-free interest rate
$r=0$.
We assume the following integrability conditions:
$$
\mathbb{E}[g^2(F^E(\tau_2;\tau_1,\tau_2))]<\infty,\quad
\mathbb{E}[h^2(F^I(\tau_2;\tau_1,\tau_2))]<\infty\,.
$$
At several places, we
will require the diffusion matrix $\sigma_i$, $i=E, I$ to satisfy the following
condition:
\begin{equation}\label{d1117} \exists
\eta>0~~~~\xi^*\sigma^*(t;\tau_1,\tau_2)\sigma(t;\tau_1,\tau_2)\xi>\eta\mid\xi\mid^2~~~\text{for
all}~~ \xi\in\mathbb R^n,~~t\in[\tau_1,\tau_2]~~\text{with}~~\xi\neq0. \end{equation}
where $\xi^*$ denotes the transpose of $\xi$. This is called the
\textit{uniform ellipticity condition}.\\
 The weight function obtained when computing Greeks using the integration by parts formula
 should not degenerate with probability one, otherwise the computation
 will not be valid. To avoid this degeneracy we introduce the set $\Upsilon_n$ (see \cite{Fournie1}) defined by
\begin{equation}
\Upsilon_n=\{a\in L^2([0,\tau_2])\mid\int_0^{t_i}a(t)dt=1~~\text {for
all}~~ i=1,\ldots,n\}.
\end{equation}
\begin{prop}
\label{DeltaE}
Assume that the diffusion matrix $\sigma_E$ is uniformly elliptic. Then for all $a\in\Upsilon_n$, the delta of the energy option is given by :
 \begin{equation*}
 \Delta_E=\mathbb{E}^{\mathbb Q}[g(F^E(\tau_2;\tau_1,
\tau_2))h(F^I(\tau_2;\tau,\tau_2))\pi^{\Delta_E}],
\end{equation*}
where the Malliavin weight $\pi^{\Delta_E}$ is
$$\pi^{\Delta_E}=\int_0^{\tau_2}a(t)\left(\sigma_E^{-1}(t;\tau_1,\tau_2)Y_E(t;\tau_1,\tau_2)\right)^*dW^E(t).$$
\end{prop}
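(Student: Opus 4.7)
The plan is to compute $\Delta_E := \partial C/\partial F^E(0;\tau_1,\tau_2)$ by differentiating inside the expectation and then transferring the derivative off the payoff using the Malliavin integration-by-parts formula. Since $g(x)=(x-K^E)^+$ is only Lipschitz, I first regularize by a sequence $g_\varepsilon\in C_b^\infty(\mathbb R)$ with $g_\varepsilon(x)\to g(x)$ and $g_\varepsilon'(x)\to \mathbf 1_{\{x>K^E\}}$ pointwise a.e.\ under the law of $F^E(\tau_2;\tau_1,\tau_2)$; establish the identity for the smooth payoff, then pass to the limit using dominated convergence, whose applicability rests on the integrability hypotheses on $g$ and $h$ together with an $L^2(\mathbb Q)$ bound on the weight $\pi^{\Delta_E}$.

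For the smoothed payoff, differentiation with respect to the initial condition $F^E(0;\tau_1,\tau_2)$ gives
$$\Delta_E^\varepsilon=\mathbb E^{\mathbb Q}\bigl[g_\varepsilon'(F^E(\tau_2;\tau_1,\tau_2))\,Y_E(\tau_2;\tau_1,\tau_2)\,h(F^I(\tau_2;\tau_1,\tau_2))\bigr],$$
where $Y_E$ is the first variation process associated with $F^E$. Property P2 yields the pointwise identity $Y_E(\tau_2)=D_tF^E(\tau_2)\,\sigma_E^{-1}(t)\,Y_E(t)$ for $t\le\tau_2$; integrating against any $a\in\Upsilon_n$ and using $\int_0^{\tau_2}a(t)\,dt=1$ converts this into
$$Y_E(\tau_2)=\int_0^{\tau_2}a(t)\,D_tF^E(\tau_2)\,\sigma_E^{-1}(t)\,Y_E(t)\,dt.$$
Because $\rho=0$, the Brownian motion $W^I$ is independent of $W^E$, so $D_t^{W^E}h(F^I(\tau_2))=0$ a.s.; combined with the chain rule (Property P1) applied to $g_\varepsilon$, this gives
$$g_\varepsilon'(F^E(\tau_2))\,Y_E(\tau_2)\,h(F^I(\tau_2))=\int_0^{\tau_2}a(t)\,D_t^{W^E}\!\bigl(g_\varepsilon(F^E(\tau_2))\,h(F^I(\tau_2))\bigr)\,\sigma_E^{-1}(t)\,Y_E(t)\,dt.$$
Taking $\mathbb E^{\mathbb Q}$ and applying the Malliavin duality formula (Property P3) to the $W^E$-adapted integrand $u(t):=a(t)\bigl(\sigma_E^{-1}(t)Y_E(t)\bigr)^*$, Property P4 identifies the Skorohod integral $\delta(u)$ with the Itô integral $\int_0^{\tau_2}a(t)(\sigma_E^{-1}(t)Y_E(t))^*dW^E(t)$, which is exactly $\pi^{\Delta_E}$; sending $\varepsilon\downarrow 0$ concludes.

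The main obstacle is the rigorous justification of the smoothing-limit procedure. I must verify that $\pi^{\Delta_E}\in L^2(\Omega,\mathbb Q)$, which is where the \emph{uniform ellipticity} hypothesis on $\sigma_E$ is used: it bounds $\|\sigma_E^{-1}\|$ and, combined with standard moment estimates on the first variation process $Y_E$ together with $a\in L^2([0,\tau_2])$, makes the Itô integrand square-integrable. Ellipticity simultaneously guarantees $u\in\mathrm{Dom}(\delta)$. Cauchy--Schwarz against the hypotheses $\mathbb E[g^2(F^E(\tau_2))]<\infty$ and $\mathbb E[h^2(F^I(\tau_2))]<\infty$ then yields uniform integrability of $g_\varepsilon(F^E(\tau_2))\,h(F^I(\tau_2))\,\pi^{\Delta_E}$ in $\varepsilon$, allowing dominated convergence to push $\varepsilon\to 0$ inside the expectation and complete the proof.
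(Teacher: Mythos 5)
Your proposal is correct and follows essentially the same route as the paper: differentiate under the expectation, express the first variation $Y_E(\tau_2;\tau_1,\tau_2)$ through $D_tF^E(\tau_2;\tau_1,\tau_2)$ integrated against $a\in\Upsilon_n$, invoke the chain rule and the duality between the Malliavin derivative and the Skorohod integral, identify the latter with the It\^o integral for the adapted integrand, and conclude by approximating the Lipschitz payoff $g$ by smooth functions. Your treatment is in fact slightly more careful than the paper's on two points the paper leaves implicit --- the vanishing of the $W^E$-Malliavin derivative of $h(F^I(\tau_2;\tau_1,\tau_2))$ under independence, and the dominated-convergence justification of the limit in the regularization --- but these are refinements of the same argument rather than a different one.
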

 \begin{proof}
Let $g$ be a continuously differentiable function with bounded derivatives. Introduce
$$Y_E(t;\tau_1,\tau_2)=\exp\left(-\frac12\int_0^t\sigma^2_E(u;\tau_1,
\tau_2)du+\int_0^t\sigma_E(u;\tau_1,\tau_2)dW^E(u)\right).$$ This implies that
$$F^E(\tau;\tau_1,\tau_2)=F^E(0;\tau_1,\tau_2)Y_E(t;\tau_1,\tau_2).$$ An application of \emph{Property $P2$} shows that $F^E(\tau_2;\tau_1,\tau_2)$ belongs to $\mathbb D^{1,2}$ and we have:
\begin{eqnarray*}
D_tF^E(\tau_2;\tau_1,\tau_2)&=&Y_E(\tau_2;\tau_1,
\tau_2)Y_E^{-1}(t;\tau_1,
\tau_2)\sigma_E(t;\tau_1,\tau_2)1_{t<\tau_2}.
\end{eqnarray*}
This is equivalent to
\begin{eqnarray*}
Y_E(\tau_2;\tau_1,\tau_2)1_{t<\tau_2}&=&D_tF^E(\tau_2;\tau_1,\tau_2)\sigma_E^{-1}
(t;\tau_1,\tau_2)Y_E(t;\tau_1,\tau_2).
\end{eqnarray*}
Multiply both sides by a square function which integrates to 1 on $[0,\tau_2]$
\begin{equation*}
Y_E(\tau_2;\tau_1,\tau_2)=\int_0^{\tau_2}
D_tF^E(\tau_2;\tau_1,\tau_2)a(t)\sigma_E^{-1}(t;\tau_1,\tau_2)Y_E(t;\tau_1,\tau_2)dt.
\end{equation*}
Now
\begin{eqnarray*}
\Delta_E&:=&\frac{\partial}{\partial F^E(0;\tau_1,\tau_2)}\mathbb E^Q[g(F^E(\tau_2;\tau_1,
\tau_2))h(F^I(\tau_2;\tau_1,\tau_2))]\\
&=&\mathbb E^Q[g^{\prime}(F^E(\tau_2;\tau_1,
\tau_2))h(F^I(\tau_2;\tau_1,\tau_2))\frac{\partial F^E(\tau_2;\tau_1,
\tau_2)}{\partial F^E(0;\tau_1,
\tau_2)}]\\
&=&\mathbb{E}^{\mathbb Q}[g^{\prime}(F^E(\tau_2;\tau_1,
\tau_2))h(F^I(\tau_2;\tau_1,\tau_2))Y_E(\tau_2;\tau_1,\tau_2)]\\
&=&\mathbb{E}^{\mathbb Q}[\int_0^{\tau_2}g^{\prime}(F^E(\tau_2;\tau_1,
\tau_2))h(F^I(\tau_2;\tau_1,\tau_2))\\
&&\quad\times D_tF^E(\tau_2;\tau_1,\tau_2)a(t)\sigma_E^{-1}(t;\tau_1,\tau_2)Y_E(t;\tau_1,\tau_2)dt]\\
&=&\mathbb{E}^{\mathbb Q}
\left[h(F^I(\tau_2;\tau_1,\tau_2))\int_0^{\tau_2}D_tg(F^E(\tau_2;\tau_1,\tau_2))a(t)\sigma_E^{-1}(t;\tau_1,\tau_2)Y_E(t;\tau_1,\tau_2)dt\right]\\
&=&\mathbb{E}^{\mathbb Q}\left[g(F^E(\tau_2;\tau_1,
\tau_2))h(F^I(\tau_2;\tau_1,\tau_2))\int_0^{\tau_2}a(t)\left(\sigma_E^{-1}(t;\tau_1,
\tau_2)Y_E(t;\tau_1,\tau_2)\right)^*dW^E(t)\right]\,\\
\end{eqnarray*}
where $g^{\prime}$ denotes the derivative of $g$ with respect to $F^E(0;\tau_1,\tau_2)$. Here, we have used the chain rule property, (\emph{Property $P1$}), the integration by parts formula (\emph{Property $P3$}), and the fact that the Skorohod integral coincides with the It$\hat{\text{o}}$ stochastic integral (\emph{Property $P4$}).\\ Since a continuously differentiable function is dense in $L^2$, the result hold for any $g\in L^2$ (see Fourni$\acute{\text{e}}$ et al. \cite{Fournie1} for details).
\end{proof}
Similarly, we obtain the following result.
\begin{prop}
\label{DeltaI}
Assume that the diffusion matrix $\sigma_I$ is uniformly elliptic. Then for all $a\in\Upsilon_n$, the delta of the energy temperature is given
by:
 \begin{equation*}
 \Delta_I=\mathbb{E}^{\mathbb Q}[g(F^E(\tau_2;\tau_1,
\tau_2))h(F^I(\tau_2;\tau,\tau_2))\pi^{\Delta_I}],
\end{equation*}
where the Malliavin weight $\pi^{\Delta_I}$ is
$$\pi^{\Delta_I}=\int_0^{\tau_2}a(t)\left(\sigma_I^{-1}(t;\tau_1,\tau_2)Y_I(t;\tau_1,\tau_2)\right)^*dW^I(t).$$
\end{prop}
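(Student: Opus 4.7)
The plan is to mirror the proof of Proposition \ref{DeltaE} with the roles of the energy and temperature factors interchanged, exploiting the fact that in the independent case ($\rho=0$) the Brownian motions $W^E$ and $W^I$ are independent, so Malliavin calculus with respect to $W^I$ treats $F^E(\tau_2;\tau_1,\tau_2)$ as a constant in the sense that its $W^I$-Malliavin derivative vanishes.

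First, I would assume without loss of generality that $h$ is continuously differentiable with bounded derivative (the general case follows by the same density argument invoked at the end of the proof of Proposition \ref{DeltaE}, citing Fourni\'{e} \emph{et al.} \cite{Fournie1}). Then I would introduce
\[
Y_I(t;\tau_1,\tau_2)=\exp\left(-\tfrac12\int_0^t\sigma_I^2(u;\tau_1,\tau_2)\,du+\int_0^t\sigma_I(u;\tau_1,\tau_2)\,dW^I(u)\right),
\]
so that $F^I(t;\tau_1,\tau_2)=F^I(0;\tau_1,\tau_2)\,Y_I(t;\tau_1,\tau_2)$. By Property P2, $F^I(\tau_2;\tau_1,\tau_2)\in\mathbb{D}^{1,2}$ and
\[
D_tF^I(\tau_2;\tau_1,\tau_2)=Y_I(\tau_2;\tau_1,\tau_2)Y_I^{-1}(t;\tau_1,\tau_2)\sigma_I(t;\tau_1,\tau_2)\mathbf{1}_{t<\tau_2},
\]
where $D_t$ denotes the Malliavin derivative with respect to $W^I$. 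Rearranging and multiplying by $a\in\Upsilon_n$, then integrating over $[0,\tau_2]$ and using $\int_0^{\tau_2}a(t)\,dt=1$, yields
\[
Y_I(\tau_2;\tau_1,\tau_2)=\int_0^{\tau_2}D_tF^I(\tau_2;\tau_1,\tau_2)\,a(t)\,\sigma_I^{-1}(t;\tau_1,\tau_2)Y_I(t;\tau_1,\tau_2)\,dt.
\]

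Next, differentiating the option price under the expectation with respect to $F^I(0;\tau_1,\tau_2)$ (justified by the integrability assumptions and dominated convergence) gives
\[
\Delta_I=\mathbb{E}^{\mathbb Q}\!\left[g(F^E(\tau_2;\tau_1,\tau_2))\,h'(F^I(\tau_2;\tau_1,\tau_2))\,Y_I(\tau_2;\tau_1,\tau_2)\right].
\]
Substituting the integral representation of $Y_I(\tau_2;\tau_1,\tau_2)$ and applying the chain rule (Property P1) to recognise $h'(F^I)\,D_tF^I=D_t\bigl(h(F^I)\bigr)$, one can pull $g(F^E(\tau_2;\tau_1,\tau_2))$ inside the $D_t$ operator since its $W^I$-Malliavin derivative vanishes by independence of $W^E$ and $W^I$. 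Finally, the integration by parts formula (Property P3) together with Property P4 (the Skorohod integral of the adapted integrand coincides with the It\^{o} integral against $dW^I$) produces the announced weight
\[
\pi^{\Delta_I}=\int_0^{\tau_2}a(t)\bigl(\sigma_I^{-1}(t;\tau_1,\tau_2)Y_I(t;\tau_1,\tau_2)\bigr)^{\!*}dW^I(t).
\]

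The only conceptual obstacle beyond the symmetric bookkeeping is justifying that $g(F^E(\tau_2;\tau_1,\tau_2))$ may be moved freely across the $W^I$-Malliavin derivative; this is exactly where the hypothesis $\rho=0$ enters, ensuring $D_t^I g(F^E(\tau_2;\tau_1,\tau_2))=0$ almost surely. Uniform ellipticity of $\sigma_I$ guarantees that $\sigma_I^{-1}$ exists and the weight does not degenerate, while the density argument at the end handles the non-smooth payoff $h(x)=(x-K^I)^{+}$ exactly as in the proof of Proposition \ref{DeltaE}.
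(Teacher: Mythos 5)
Your proposal is correct and follows essentially the same route as the paper, which simply states that the proof ``follows along the lines of the proof of Proposition \ref{DeltaE}''; you have written out exactly that symmetric argument with the roles of $E$ and $I$ interchanged. You also make explicit the one point the paper leaves implicit, namely that $\rho=0$ forces the $W^I$-Malliavin derivative of $g(F^E(\tau_2;\tau_1,\tau_2))$ to vanish so that this factor passes through the integration by parts, which is a welcome clarification rather than a deviation.
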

\begin{proof}
Follows along the lines of the proof of Proposition of \ref{DeltaE}.
\end{proof}
The following result gives the cross-gamma hedge in the independent case:
\begin{prop}\label{DeltaEI}
Assume that the diffusion matrices $\sigma_i$, $i=E, I$ are uniformly elliptic. Then for all $a\in\Upsilon_n$, the following hold:
 \begin{equation*}
 \Delta_{EI}=\mathbb{E}^{\mathbb Q}[
g(F^E(\tau_2;\tau_1,\tau_2))h(F^I(\tau_2;\tau,\tau_2))\pi^{\Delta_{EI}}],
\end{equation*}
where the Malliavin weight $\pi^{\Delta_{EI}}$ is
\begin{equation*}
 \pi^{\Delta_{EI}}=\int_0^{\tau_2}a(t)\left(\sigma_E^{-1}(t;\tau_1,\tau_2)Y_E(t;\tau_1,\tau_2)\right)^*dW^E(t)\int_0^{\tau_2}a(t)\left(\sigma_I^{-1}(t;\tau_1,\tau_2)Y_I(t;\tau_1,\tau_2)\right)^*dW^I(t).
\end{equation*}
 \end{prop}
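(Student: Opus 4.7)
The plan is to iterate the one-dimensional argument of Proposition \ref{DeltaE}: differentiate twice, once with respect to each of $F^E(0;\tau_1,\tau_2)$ and $F^I(0;\tau_1,\tau_2)$, and then perform a Malliavin integration by parts in each Brownian direction separately. The structural fact I would exploit throughout is that in the independent case $\rho=0$ the Brownian motions $W^E$ and $W^I$ are independent, so any random variable depending only on $W^E$ has a vanishing Malliavin derivative in the $W^I$-direction, and symmetrically. This is precisely what lets the weight split as a product of two one-dimensional weights.

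First, for $g,h$ continuously differentiable with bounded derivatives, differentiation inside the expectation together with $F^i(\tau;\tau_1,\tau_2)=F^i(0;\tau_1,\tau_2)Y_i(\tau;\tau_1,\tau_2)$ yields
\[
\Delta_{EI}=\mathbb{E}^{\mathbb Q}\bigl[g'(F^E(\tau_2;\tau_1,\tau_2))\,h'(F^I(\tau_2;\tau_1,\tau_2))\,Y_E(\tau_2;\tau_1,\tau_2)\,Y_I(\tau_2;\tau_1,\tau_2)\bigr].
\]
I would then process this expression in two stages. In the first stage, I replicate the manipulation of Proposition \ref{DeltaE} in the $W^E$-direction: rewrite $Y_E(\tau_2;\tau_1,\tau_2)$ as an integral against $D^E_tF^E(\tau_2;\tau_1,\tau_2)$, absorb $g'(F^E)$ into $D^E_t g(F^E)$ by the chain rule (\emph{Property $P1$}), and integrate by parts against $dW^E$ via \emph{Properties $P3$} and \emph{$P4$}. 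Because $h'(F^I(\tau_2))Y_I(\tau_2)$ is a functional of $W^I$ alone, its $D^E$-derivative vanishes and it behaves as a constant under this duality, so what emerges is the expectation of $g(F^E(\tau_2))h'(F^I(\tau_2))Y_I(\tau_2)$ multiplied by $\int_0^{\tau_2}a(t)(\sigma_E^{-1}(t;\tau_1,\tau_2)Y_E(t;\tau_1,\tau_2))^*\,dW^E(t)$. In the second stage I repeat the identical argument in the $W^I$-direction: replace $Y_I(\tau_2;\tau_1,\tau_2)$ by its integral representation in terms of $D^I_sF^I(\tau_2;\tau_1,\tau_2)$, absorb $h'$ into $D^I_s h(F^I)$, and apply the duality against $dW^I$. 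Both $g(F^E(\tau_2))$ and the $dW^E$-integral carried over from the previous step are functionals of $W^E$ alone, their $D^I$-derivatives are zero, and the second integration by parts therefore produces exactly the product weight $\pi^{\Delta_{EI}}$ stated in the proposition. A density argument as in Fourni\'{e} et al.\ \cite{Fournie1} then extends the identity from smooth $g,h$ to $g,h\in L^2$.

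The step I expect to be the main technical obstacle is the verification of the integrability hypothesis in \emph{Property $P5$} for the iterated integration by parts, namely that the Skorohod integrands produced at each stage lie in $L^2(\Omega\times[0,\tau_2])$ so that they coincide with It\^{o} integrals. Uniform ellipticity of $\sigma_E$ and $\sigma_I$ combined with the assumed $L^2$-integrability of $g(F^E(\tau_2))$ and $h(F^I(\tau_2))$ and boundedness of derivatives in the smooth approximation is what controls these norms. It is precisely the independence $\rho=0$ that makes the two one-dimensional dualities decouple into a multiplicative weight; when $\rho\neq 0$ the cross-terms survive and the factorization breaks down, which is why Section 5 treats the correlated case by a different route.
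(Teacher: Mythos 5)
Your proposal is correct and follows essentially the same route as the paper: the paper also reduces to smooth $g,h$, reuses the $W^E$-direction integration by parts already carried out in Proposition \ref{DeltaE}, then differentiates in $F^I(0;\tau_1,\tau_2)$ and performs a second integration by parts in the $W^I$-direction, with independence of $W^E$ and $W^I$ making the weight factor into the product $\pi^{\Delta_{EI}}$. The only difference is the order of operations (you differentiate twice before integrating by parts, the paper interleaves the two), which is immaterial; your explicit attention to the vanishing of the cross-direction Malliavin derivatives and to the $L^2$ integrability of the Skorohod integrands is, if anything, more careful than the paper's own write-up.
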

\begin{proof}
We first assume that $g$ and $h$ are continuously differentiable with bounded derivatives. From Proposition \ref{DeltaE} we have
\begin{equation*}
 \Delta_E=\mathbb{E}^{\mathbb Q}[g(F^E(\tau_2;\tau_1,
\tau_2))h(F^I(\tau_2;\tau,\tau_2))Z^E(\tau_2)].
\end{equation*}
%where the Malliavin weight is
%$Z^E(\tau_2)=\int_0^{\tau_2}a(t)\sigma_E^{-1}(t;\tau_1,\tau_2)dW^E(t)$\,.
As in Proposition \ref{DeltaE}, we introduce
$$Y_I(t;\tau_1,\tau_2)=\exp\left(-\frac12\int_0^t\sigma^2_I(u;\tau_1,
\tau_2)du+\int_0^t\sigma_I(u;\tau_1,\tau_2)dW^I(u)\right).$$ This implies that
$$F^I(\tau;\tau_1,\tau_2)=F^I(0;\tau_1,\tau_2)Y_I(t;\tau_1,\tau_2).$$ An application of \emph{Property $P2$} shows that $F^I(\tau_2;\tau_1,\tau_2)$ belongs to $\mathbb D^{1,2}$ and we have:
\begin{eqnarray*}
D_tF^I(\tau_2;\tau_1,\tau_2)&=&Y_I(\tau_2;\tau_1,\tau_2)Y_I^{-1}(t;\tau_1,\tau_2)\sigma_I(t;\tau_1,\tau_2)1_{t<\tau_2}.
\end{eqnarray*}
This is equivalent to
\begin{eqnarray*}
Y_I(\tau_2;\tau_1,\tau_2)1_{t<\tau_2}&=&D_tF^I(\tau_2;\tau_1,\tau_2)\sigma_I^{-1}(t;\tau_1,\tau_2)Y_I(t;\tau_1,\tau_2).
\end{eqnarray*}
Multiply both sides by a square function which integrates to 1 on $[0,\tau_2]$
\begin{equation*}
Y_I(\tau_2;\tau_1,\tau_2)=\int_0^{\tau_2}
D_tF^I(\tau_2;\tau_1,\tau_2)a(t)\sigma_I^{-1}(t;\tau_1,\tau_2)Y_I(t;\tau_1,\tau_2)dt.
\end{equation*}
%\begin{equation*}
% \Delta_{EI}=F^E(0;\tau_1,\tau_2)^{-1}\mathbb{E}^{\mathbb Q}[g(F^E(\tau_2;\tau_1,
%\tau_2))Z^E(\tau_2)h'(F^I(\tau_2;\tau,\tau_2))D_tF^I(\tau_2;\tau_1,\tau_2)]\,.
%\end{equation*}
%But as observed from the proof of Proposition \ref{DeltaI}
%\begin{eqnarray*}
%D_tF^I(\tau_2;\tau_1,\tau_2)&=&F^I(0;\tau_1,\tau_2)Y_I(\tau_2;\tau_1,
%\tau_2)\sigma_I(t;\tau_1,\tau_2)1_{t<\tau_2}\\
%Y_I(\tau_2;\tau_1,\tau_2)1_{t<\tau_2}&=&F^I(0;\tau_1,\tau_2)^{-1}\sigma_I^{-1}
%(t;\tau_1,\tau_2)D_tF^I(\tau_2;\tau_1,\tau_2).
%\end{eqnarray*}
%Multiply both sides by a square function which integrates to 1 on $[0,\tau_2]$
%\begin{equation*}
%Y_I(\tau_2;\tau_1,\tau_2)=F^I(0;\tau_1,\tau_2)^{-1}\int_0^{\tau_2}
%D_tF^I(\tau_2;\tau_1,\tau_2)a(t)\sigma_I^{-1}(t;\tau_1,\tau_2)dt.
%\end{equation*}
Now
\begin{eqnarray*}
\Delta_{EI}&:=&\frac{\partial}{\partial F^I(0;\tau_1,\tau_2)}\Delta_E\\
&=&\frac{\partial}{\partial F^I(0;\tau_1,\tau_2)}\left[ \mathbb{E}^{\mathbb Q}[g(F^E(\tau_2;\tau_1,
\tau_2))h(F^I(\tau_2;\tau,\tau_2))Z^E(\tau_2)] \right]\\
&=&\mathbb{E}^{\mathbb Q}[g(F^E(\tau_2;\tau_1,
\tau_2))Z^E(\tau_2)h^{\prime}(F^I(\tau_2;\tau,\tau_2))\frac{\partial F^I(\tau_2;\tau_1,\tau_2)}{\partial F^I(0;\tau_1,\tau_2)}]\\
&=&\mathbb{E}^{\mathbb Q}[g(F^E(\tau_2;\tau_1,
\tau_2))Z^E(\tau_2)h^{\prime}(F^I(\tau_2;\tau,\tau_2))Y_I(\tau_2;\tau_1,\tau_2)]\\
&=&\mathbb{E}^{\mathbb Q}[\int_0^{\tau_2
}g(F^E(\tau_2;\tau_1,\tau_2))h^{\prime}(F^I(\tau_2;\tau_1,\tau_2))Z^E(\tau_2)\\
&&\quad\times
D_tF^I(\tau_2;\tau_1,\tau_2)a(t)\sigma_I^{-1}(t;\tau_1,\tau_2)Y_I(t;\tau_1,\tau_2)dt]\\
&=&\mathbb{E}^{\mathbb Q}[
g(F^E(\tau_2;\tau_1,\tau_2))Z^E(\tau_2)\\
&&\quad\times\int_0^{\tau_2}D_t(h(F^I(\tau_2;\tau_1,\tau_2)))a(t)\sigma_I^{-1}(t;\tau_1,
\tau_2)Y_I(t;\tau_1,\tau_2)dt]\\
&=&\mathbb{E}^{\mathbb Q}[
g(F^E(\tau_2;\tau_1,\tau_2))h(F^I(\tau_2;\tau_1,\tau_2))Z^E(\tau_2)\\
&&\quad\times\int_0^{\tau_2}a(t)\left(\sigma_I^{-1}(t;\tau_1,\tau_2)Y_I(t;\tau_1,\tau_2)\right)^*dW^I(t)]\,.\\
\end{eqnarray*}
Here, we have used the chain rule property, (\emph{Property $P1$}), the integration by parts formula (\emph{Property $P3$}), and the fact that the Skorohod integral coincides with the It$\hat{\text{o}}$ stochastic integral (\emph{Property $P4$}).\\
 The result can be extended to the general case by a density argument. We omit the details.
\end{proof}
\section{The Correlation Case}
We consider the following HJM
\begin{eqnarray}
 \quad\quad dF^E(t;\tau_1,\tau_2) &=&\sigma_E(t, F^E(t;\tau_1,\tau_2))dW^E(t),\\
 \quad\quad dF^I(t;\tau_1,\tau_2) &=& \rho\sigma_I(t,F^I(t;\tau_1,\tau_2))dW^E(t)+\sigma_I(t,F^I(t;\tau_1,\tau_2))\sqrt{1-\rho^2} d\widetilde{W}^I(t)\,.
\end{eqnarray}
%\begin{eqnarray*}
%% \nonumber to remove numbering (before each equation)
%  dF^E(t;\tau_1,\tau_2) &=&
%\sigma_E(t;\tau_1,\tau_2)(F^E(t;\tau_1,\tau_2))dW^E(t) \\
%  dF^I(t;\tau_1,\tau_2) &=& \sigma_I(t;\tau_1,\tau_2)(F^I(t;\tau_1,\tau_2)\rho
%dB^I(t)+\sigma_I(t;\tau_1,\tau_2)(F^I(t;\tau_1,\tau_2)\sqrt{1-\rho^2}dW^I(t),
%\end{eqnarray*}
with $F^E(0;\tau_1,\tau_2)>0,$ and $F^I(0;\cdot)>0.$\\
%$$
%\left(
%  \begin{array}{c}
%    dF^E \\
%    dF^I \\
%  \end{array}
%\right)
%=\left(
%   \begin{array}{cc}
%     \sigma_EF^E & 0 \\
%     \sigma_IF^I & \sqrt{1-\rho^2}\sigma_IF^I \\
%   \end{array}
% \right)
% \left(
%   \begin{array}{c}
%     dB^E \\
%     dB^I\\
%   \end{array}
% \right)\,.
% $$
  That is, we consider the case when there is correlation between $F^E$ and $F^I$.
Suppose the Brownian motions $B_1$ and $B_2$ are independent. Let $W_1=B_1$ and
$W_2=\rho B_1+\sqrt{1-\rho^2}B_2$. This implies that
$$
g(W_1)h(W_2)=g(B_1)h( \rho B_1+\sqrt{1-\rho^2}B_2).
$$
In this setting, we have the following quanto option structure:
$$
C=\mathbb{E}^{\mathbb Q}[g(F^E(\tau_2;\tau_1,\tau_2))h(\rho
F^E(\tau_2;\tau_1,\tau_2)+\sqrt{1-\rho^2} F^I(\tau_2;\tau_1,\tau_2))]\,.
$$
Now we derive the energy delta.
\begin{prop}
\label{DeltaTempcor}
Assume that the diffusion matrix $\sigma_E$ is uniformly elliptic. Then for all $a\in\Upsilon_n$, the following hold:
\begin{eqnarray*}
\Delta_E&=&\mathbb{E}^{\mathbb Q}[g(F^E(\tau_2;\tau_1,\tau_2))h(\rho F^E(\tau_2;\tau_1,\tau_2)+\sqrt{1-\rho^2}F^I(\tau_2;\tau_1,\tau_2))\pi^{\Delta_E}  (1+\rho)]\,,\\
\end{eqnarray*}
where the Malliavin weight $\pi^{\Delta_E}$ is
$$\pi^{\Delta_E}=\int_0^{\tau_2}a(t)\left(\sigma_E^{-1}(t;\tau_1,\tau_2)Y_E(t;\tau_1,\tau_2)\right)^*dW^E(t).$$
\end{prop}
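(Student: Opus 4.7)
The plan is to mimic the proof of Proposition~\ref{DeltaE}, keeping careful track of an extra contribution arising because, in the correlated setting, the argument $Z := \rho F^E(\tau_2;\tau_1,\tau_2) + \sqrt{1-\rho^2}\,F^I(\tau_2;\tau_1,\tau_2)$ of $h$ also depends on $F^E(0;\tau_1,\tau_2)$ through its $\rho F^E(\tau_2)$ piece. As in the independent case, I first assume $g$ and $h$ are continuously differentiable with bounded derivatives, and recover the general measurable case at the end by the standard $L^2$-density argument used in~\cite{Fournie1}.

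The first step is to differentiate the pricing formula under the expectation with respect to $F^E(0;\tau_1,\tau_2)$. Since the SDE governing $F^I$ does not involve $F^E(0;\tau_1,\tau_2)$, \emph{Property~$P2$} yields $\partial F^E(\tau_2)/\partial F^E(0) = Y_E(\tau_2;\tau_1,\tau_2)$ and $\partial F^I(\tau_2)/\partial F^E(0) = 0$, so the chain rule produces
\begin{equation*}
\Delta_E = \mathbb{E}^{\mathbb Q}\bigl[ Y_E(\tau_2)\, g^{\prime}(F^E(\tau_2))\, h(Z)\bigr] + \rho\,\mathbb{E}^{\mathbb Q}\bigl[ Y_E(\tau_2)\, g(F^E(\tau_2))\, h^{\prime}(Z)\bigr],
\end{equation*}
the explicit factor $\rho$ in the second summand coming from $\partial Z/\partial F^E(0) = \rho\, Y_E(\tau_2)$.

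The second step is to substitute in each of the two summands the representation
\begin{equation*}
Y_E(\tau_2) = \int_0^{\tau_2} a(t)\, D_t F^E(\tau_2)\, \sigma_E^{-1}(t;\tau_1,\tau_2)\, Y_E(t;\tau_1,\tau_2)\,dt,
\end{equation*}
already obtained in the proof of Proposition~\ref{DeltaE} from the uniform ellipticity of $\sigma_E$ and the choice $a \in \Upsilon_n$. In the first summand, the chain rule (\emph{Property~$P1$}) rewrites $g^{\prime}(F^E(\tau_2)) D_t F^E(\tau_2) = D_t[g(F^E(\tau_2))]$, and the integration by parts formula (\emph{Property~$P3$}) together with the coincidence of the Skorohod integral and the It$\hat{\text{o}}$ integral on adapted integrands (\emph{Property~$P4$}) delivers $\mathbb{E}^{\mathbb Q}[g(F^E(\tau_2))\, h(Z)\,\pi^{\Delta_E}]$, exactly as in the independent case. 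In the second summand, $h^{\prime}(Z)\rho D_t F^E(\tau_2)$ is recognised via \emph{Property~$P1$} as the $W^E$-component of $D_t h(Z)$, and the same manipulation produces $\rho\,\mathbb{E}^{\mathbb Q}[g(F^E(\tau_2))\, h(Z)\,\pi^{\Delta_E}]$. Adding the two contributions gives the factor $(1+\rho)$ claimed in the statement.

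The step I expect to be most delicate is precisely the second integration by parts: the full Malliavin derivative $D_t h(Z) = h^{\prime}(Z)\bigl(\rho D_t F^E(\tau_2)+\sqrt{1-\rho^2}\, D_t F^I(\tau_2)\bigr)$ carries an additional $D_t F^I$ contribution, since $F^I$ is itself driven by $W^E$ in the correlated model. Isolating the $\rho D_t F^E$ piece requires using the explicit formula $D_t F^I(\tau_2) = \rho\, Y_I(\tau_2) Y_I^{-1}(t;\tau_1,\tau_2)\sigma_I(t;\tau_1,\tau_2) 1_{\{t<\tau_2\}}$ from \emph{Property~$P2$}, together with the adaptedness of the weight $a(t)\sigma_E^{-1}(t)Y_E(t)$, to absorb the residual term. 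Once this is carried out, the final density argument, as in~\cite{Fournie1}, extends the identity from smooth bounded $g,h$ to general payoffs in $L^2$.
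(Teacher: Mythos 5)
Your route is the same as the paper's: split $\Delta_E$ by the chain rule into $\mathbb{E}^{\mathbb Q}[g'(F^E)h(Z)Y_E(\tau_2)]+\rho\,\mathbb{E}^{\mathbb Q}[g(F^E)h'(Z)Y_E(\tau_2)]$ with $Z=\rho F^E(\tau_2;\tau_1,\tau_2)+\sqrt{1-\rho^2}F^I(\tau_2;\tau_1,\tau_2)$, represent $Y_E(\tau_2;\tau_1,\tau_2)$ through $D_tF^E(\tau_2;\tau_1,\tau_2)$, and integrate each summand by parts. The genuine gap is in the integration-by-parts step. In the independent case the factor $h(F^I)$ is killed by $D^{W^E}$, so $h(F^I)\int_0^{\tau_2}D_t[g(F^E)]u_t\,dt=\int_0^{\tau_2}D_t[g(F^E)h(F^I)]u_t\,dt$ and \emph{Property $P3$}/\emph{$P4$} applies directly. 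Here $h(Z)$ is no longer $D^{W^E}$-constant, so the duality must be applied to the random integrand $h(Z)u_t$ via \emph{Property $P5$}, which brings in the correction $\delta(h(Z)u)=h(Z)\delta(u)-\int_0^{\tau_2}D_t[h(Z)]u_t\,dt$. Carrying it through, your first summand equals $\mathbb{E}^{\mathbb Q}[g(F^E)h(Z)\pi^{\Delta_E}]-\rho\,\mathbb{E}^{\mathbb Q}[g(F^E)h'(Z)Y_E(\tau_2)]$, and the correction exactly cancels your second summand. Equivalently, since $D_t\bigl[g(F^E)h(Z)\bigr]=\bigl(g'(F^E)h(Z)+\rho\,g(F^E)h'(Z)\bigr)D_tF^E(\tau_2;\tau_1,\tau_2)$, the two summands together are $\mathbb{E}^{\mathbb Q}\bigl[\int_0^{\tau_2}D_t[g(F^E)h(Z)]u_t\,dt\bigr]=\mathbb{E}^{\mathbb Q}[g(F^E)h(Z)\pi^{\Delta_E}]$, with no factor $(1+\rho)$. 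A sanity check with $g(x)=x$, $h(y)=y$, constant volatilities and $a=1/\tau_2$ gives $\Delta_E=2\rho F^E(0)e^{\sigma_E^2\tau_2}+\sqrt{1-\rho^2}F^I(0)$, which equals $\mathbb{E}^{\mathbb Q}[gh\,\pi^{\Delta_E}]$ and not $(1+\rho)\mathbb{E}^{\mathbb Q}[gh\,\pi^{\Delta_E}]$; so the step "the same manipulation produces $\rho\,\mathbb{E}^{\mathbb Q}[gh\,\pi^{\Delta_E}]$" cannot be repaired to yield the displayed formula. (The paper's own proof asserts the same unjustified passage, dropping and reinserting factors of $\rho$ along the way, so you have faithfully reproduced its conclusion together with its flaw.)

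A secondary point: the "delicate step" you flag, namely an extra $W^E$-component in $D_tF^I(\tau_2;\tau_1,\tau_2)$, is a misreading of the setup. The whole purpose of rewriting the payoff as $g(F^E)h(\rho F^E+\sqrt{1-\rho^2}F^I)$ is that $F^E$ and $F^I$ are then driven by the independent motions $W^E$ and $\widetilde{W}^I$, so $D^{W^E}_tF^I(\tau_2;\tau_1,\tau_2)=0$ and $D_th(Z)=\rho\,h'(Z)D_tF^E(\tau_2;\tau_1,\tau_2)$ with no residual term to absorb. The true delicacy is the Skorohod product rule above, which your proposal (like the paper) does not account for.
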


\begin{proof}
Let $g$ be a continuously differentiable function with bounded derivatives. As in Proposition \ref{DeltaE}, introduce
$$Y_E(t;\tau_1,\tau_2)=\exp(-\frac12\int_0^t\sigma^2_E(u;\tau_1,
\tau_2)du+\int_0^t\sigma_E(u;\tau_1,\tau_2)dW^E(u)).$$ This implies that
$$F^E(t;\tau_1,\tau_2)=F^E(0;\tau_1,\tau_2)Y_E(t;\tau_1,\tau_2).$$
An application of \emph{Property $P2$} shows that $F^E(\tau_2;\tau_1,\tau_2)$ belongs to $\mathbb D^{1,2}$ and we have:
\begin{eqnarray*}
D_tF^E(\tau_2;\tau_1,\tau_2)&=&Y_E(\tau_2;\tau_1,
\tau_2)Y_E^{-1}(t;\tau_1,\tau_2)\sigma_E(t;\tau_1,\tau_2)1_{t<\tau_2}.
\end{eqnarray*}
This is equivalent to
\begin{eqnarray*}
Y_E(\tau_2;\tau_1,\tau_2)1_{t<\tau_2}&=&D_tF^E(\tau_2;\tau_1,\tau_2)\sigma_E^{-1}
(t;\tau_1,\tau_2)Y_E(t;\tau_1,\tau_2).
\end{eqnarray*}
Multiply both sides by a square function which integrates to 1 on $[0,\tau_2]$
\begin{equation*}
Y_E(\tau_2;\tau_1,\tau_2)=\int_0^{\tau_2}
D_tF^E(\tau_2;\tau_1,\tau_2)a(t)\sigma_E^{-1}(t;\tau_1,\tau_2)Y_E(t;\tau_1,\tau_2)dt.
\end{equation*}
Now
\begin{eqnarray*}
\Delta_E&=&\mathbb{E}^{\mathbb Q}[g^{\prime}(F^E(\tau_2;\tau_1,\tau_2))h(\rho
F^E(\tau_2;\tau_1,\tau_2)+\sqrt{1-\rho^2}
F^I(\tau_2;\tau_1,\tau_2))Y_E(\tau_2;\tau_1,\tau_2)\\
&&\quad +g(F^E(\tau_2;\tau_1,\tau_2))h^{\prime}(\rho
F^E(\tau_2;\tau_1,\tau_2)+\sqrt{1-\rho^2} F^I(\tau_2;\tau_1,\tau_2))\rho
Y_E(\tau_2;\tau_1,\tau_2)]\\
&=&\mathbb{E}^{\mathbb Q}[\int_0^{\tau_2}g^{\prime}(F^E(\tau_2;\tau_1,
\tau_2))h(\rho F^E(\tau_2;\tau_1,\tau_2)+\sqrt{1-\rho^2}
F^I(\tau_2;\tau_1,\tau_2))\\
&&\quad \times
D_tF^E(\tau_2;\tau_1,\tau_2)a(t)\sigma_E^{-1}(t;\tau_1,\tau_2)Y_E(t;\tau_1,\tau_2)dt\\
&&\quad +\int_0^{\tau_2}
g(F^E(\tau_2;\tau_1,\tau_2))h^{\prime}(\rho F^E(\tau_2;\tau_1,\tau_2)+\sqrt{1-\rho^2}
F^I(\tau_2;\tau_1,\tau_2))\\
&&\quad\times D_tF^E(\tau_2;\tau_1,\tau_2)a(t)\sigma_E^{-1}(t;\tau_1
,\tau_2)Y_E(t;\tau_1,\tau_2)dt]\\
&=&\mathbb{E}^{\mathbb Q}[h(\rho
F^E(\tau_2;\tau_1,\tau_2)+\sqrt{1-\rho^2} F^I(\tau_2;\tau_1,\tau_2))\\
&&\times\int_0^{\tau_2}D_tg(F^E(\tau_2;\tau_1,\tau_2))a(t)\sigma_E^{-1}(t;\tau_1
,\tau_2)Y_E(t;\tau_1,\tau_2)dt
 +\rho g(F^E(\tau_2;\tau_1,\tau_2))\\
&&\quad \times\int_0^{\tau_2}D_t h(\rho
F^E(\tau_2;\tau_1,\tau_2)+\sqrt{1-\rho^2}
F^I(\tau_2;\tau_1,\tau_2))a(t)\sigma_E^{-1}(t;\tau_1,\tau_2)Y_E(t;\tau_1,\tau_2)dt]\\
&=&\mathbb{E}^{\mathbb Q}[h(\rho
F^E(\tau_2;\tau_1,\tau_2)+\sqrt{1-\rho^2} F^I(\tau_2;\tau_1,\tau_2))\\
&&\quad\times
g(F^E(\tau_2;\tau_1,\tau_2))\int_0^{\tau_2}a(t)\left(\sigma_E^{-1}(t;\tau_1,
\tau_2)Y_E(t;\tau_1,\tau_2)\right)^*dW^E(t)\\
&&\quad +\rho g(F^E(\tau_2;\tau_1,\tau_2))h(\rho F^E(\tau_2;\tau_1,\tau_2)+\sqrt{1-\rho^2}
F^I(\tau_2;\tau_1,\tau_2))\\
&&\quad \times\int_0^{\tau_2}a(t)\left(\sigma_E^{-1}(t;\tau_1,\tau_2)Y_E(t;\tau_1,\tau_2)\right)^*dW^E(t)]\,.\\
\end{eqnarray*}
As in the proof of Proposition \ref{DeltaE}, we have used the chain rule property, (\emph{Property $P1$}), the integration by parts formula (\emph{Property $P3$}), and the fact that the Skorohod integral coincides with the It$\hat{\text{o}}$ stochastic integral (\emph{Property $P4$}).\\
The result can be extended to the general case by a density argument. We omit the details.
\end{proof}
Now we derive the temperature delta in the correlation case.
\begin{prop}
Assume that the diffusion matrix $\sigma_I$ is uniformly elliptic. Then for all $a\in\Upsilon_n$, the following hold :
 \begin{eqnarray*}
\Delta_I&=&\sqrt{1-\rho^2}\mathbb{E}^{\mathbb Q}[g(F^E(\tau_2;\tau_1,\tau_2))h(\rho
F^E(\tau_2;\tau_1,\tau_2)+\sqrt{1-\rho^2} F^I(\tau_2;\tau_1,\tau_2)) \pi^{\Delta_I}].
 \end{eqnarray*}
where the Malliavin weight $\pi^{\Delta_I}$ is
$$\pi^{\Delta_I}=\int_0^{\tau_2}a(t)\left(\sigma_I^{-1}(t;\tau_1,\tau_2)Y_I(t;\tau_1,\tau_2)\right)^*dW^I(t).$$
\end{prop}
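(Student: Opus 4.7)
The plan is to mirror the proof of Proposition~\ref{DeltaTempcor}, interchanging the roles of the energy and temperature futures and differentiating the quanto price
$$C=\mathbb{E}^{\mathbb Q}[g(F^E(\tau_2;\tau_1,\tau_2))h(\rho F^E(\tau_2;\tau_1,\tau_2)+\sqrt{1-\rho^2}F^I(\tau_2;\tau_1,\tau_2))]$$
with respect to $F^I(0;\tau_1,\tau_2)$ rather than $F^E(0;\tau_1,\tau_2)$. Since $g(F^E(\tau_2))$ is independent of $F^I(0;\tau_1,\tau_2)$, the chain-rule expansion of $\partial/\partial F^I(0)$ produces only the term in which the derivative lands on $h$, and this inner derivative picks up a factor of $\sqrt{1-\rho^2}$ from the coefficient of $F^I$ inside $h$; this is the source of the $\sqrt{1-\rho^2}$ multiplying the Malliavin weight in the statement. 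In particular, the calculation is structurally simpler than that of Proposition~\ref{DeltaTempcor} and no factor of the form $(1+\rho)$ appears.

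First I would assume $h$ is continuously differentiable with bounded derivative, deferring the general $L^2$ case to the standard density argument (Fourni\'e \emph{et al.}~\cite{Fournie1}). Introduce
$$Y_I(t;\tau_1,\tau_2)=\exp\left(-\tfrac{1}{2}\int_0^t\sigma_I^2(u;\tau_1,\tau_2)\,du+\int_0^t\sigma_I(u;\tau_1,\tau_2)\,dW^I(u)\right),$$
so that $F^I(\tau_2;\tau_1,\tau_2)=F^I(0;\tau_1,\tau_2)Y_I(\tau_2;\tau_1,\tau_2)$. By Property~$P2$ we have $F^I(\tau_2)\in\mathbb D^{1,2}$ with $D_tF^I(\tau_2)=Y_I(\tau_2)Y_I^{-1}(t)\sigma_I(t)\mathbf{1}_{t<\tau_2}$, and multiplying by any $a\in\Upsilon_n$ and integrating over $[0,\tau_2]$ yields the identity
$$Y_I(\tau_2;\tau_1,\tau_2)=\int_0^{\tau_2}D_tF^I(\tau_2;\tau_1,\tau_2)\,a(t)\sigma_I^{-1}(t;\tau_1,\tau_2)Y_I(t;\tau_1,\tau_2)\,dt.$$
Differentiating the price under the expectation and using $\partial F^I(\tau_2)/\partial F^I(0)=Y_I(\tau_2)$ gives
$$\Delta_I=\sqrt{1-\rho^2}\,\mathbb{E}^{\mathbb Q}[g(F^E(\tau_2))h'(\rho F^E(\tau_2)+\sqrt{1-\rho^2}F^I(\tau_2))Y_I(\tau_2)].$$
I then substitute the representation of $Y_I(\tau_2)$, use the chain rule (Property~$P1$) to identify $h'(\cdot)D_tF^I(\tau_2)$ with the Malliavin derivative $D_th(\cdot)$ in the $W^I$-direction, and apply the integration-by-parts formula (Property~$P3$), noting that $g(F^E)$ passes through the Malliavin integration since it is independent of $W^I$ in the decorrelated framework of Section~5. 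Because the integrand $a(t)\sigma_I^{-1}(t;\tau_1,\tau_2)Y_I(t;\tau_1,\tau_2)$ is adapted, Property~$P4$ then identifies the resulting Skorohod integral with the It\^o integral $\int_0^{\tau_2}a(t)(\sigma_I^{-1}(t;\tau_1,\tau_2)Y_I(t;\tau_1,\tau_2))^{\!*}\,dW^I(t)=\pi^{\Delta_I}$.

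The main obstacle is the consistent bookkeeping of the $\sqrt{1-\rho^2}$ factor and the clean justification that $g(F^E(\tau_2))$ can be pulled through the integration by parts. This latter point is not delicate in the present setting because the decorrelated representation used in Section~5 makes $g(F^E(\tau_2))$ depend only on $W^E$, so its $W^I$-directional Malliavin derivative vanishes and no additional correction term of the Property~$P5$ type arises; the argument is then the direct counterpart of Proposition~\ref{DeltaI} with the extra multiplicative factor $\sqrt{1-\rho^2}$ inherited from differentiating $h(\rho F^E+\sqrt{1-\rho^2}F^I)$ in the $F^I(0)$-direction.
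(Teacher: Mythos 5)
Your proposal follows exactly the route the paper intends: the paper's own proof of this proposition is the single line ``similar to Proposition \ref{DeltaTempcor}'', and your reconstruction --- differentiate under the expectation in the $F^I(0;\tau_1,\tau_2)$-direction, observe that only the $h$-term survives and contributes the factor $\sqrt{1-\rho^2}$, then run the $Y_I$-representation, chain rule, integration by parts, and Skorohod--It\^o identification as in Propositions \ref{DeltaI} and \ref{DeltaTempcor} --- is precisely that argument, so the approaches coincide. One caveat, inherited from the paper itself: where you ``identify $h'(\cdot)D_tF^I(\tau_2)$ with $D_th(\cdot)$ in the $W^I$-direction'', a literal application of Property $P1$ gives $D_th(\rho F^E+\sqrt{1-\rho^2}F^I)=\sqrt{1-\rho^2}\,h'(\cdot)D_tF^I(\tau_2)$ in the $\widetilde{W}^I$-direction (since $F^E$ is $W^E$-measurable there), which would cancel the outer $\sqrt{1-\rho^2}$ and leave no prefactor; the paper performs the same identification in Proposition \ref{DeltaTempcor} (which is how its $(1+\rho)$ factor arises), so your bookkeeping is faithful to the paper's stated result rather than a new gap of your own.
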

\begin{proof}
The proof is similar to that of Proposition \ref{DeltaTempcor}.
\end{proof}
The following result gives the cross-gamma hedge in the correlation case.
\begin{prop}
Assume that the diffusion matrices $\sigma_i$, $i=E, I$, are uniformly elliptic. Then for all $a\in\Upsilon_n$, the following hold:
 \begin{eqnarray*}
\Delta_{EI}&=&\sqrt{1-\rho^2}\mathbb{E}^{\mathbb Q}[g(F^E(\tau_2;\tau_1,\tau_2))h(\rho F^E(\tau_2;\tau_1,\tau_2)+\sqrt{1-\rho^2}
F^I(\tau_2;\tau_1,\tau_2))
 \pi^{\Delta_{EI}}\\
&&~+\rho\sqrt{1-\rho^2}
g(F^E(\tau_2;\tau_1,\tau_2))h(\rho F^E(\tau_2;\tau_1,\tau_2)+\sqrt{1-\rho^2}
F^I(\tau_2;\tau_1,\tau_2))\pi^{\Delta_{EI}}]\,\\
\end{eqnarray*}
where the Malliavin weight $\pi^{\Delta_{EI}}$ is:
\begin{equation*}
\pi^{\Delta_{EI}}=\int_0^{\tau_2}a(t)\left(\sigma_E^{-1}(t;\tau_1,\tau_2)Y_E(t;\tau_1,\tau_2)\right)^*dW^E(t)\int_0^{\tau_2}a(t)\left(\sigma_I^{-1}(t;\tau_1,\tau_2)Y_I(t;\tau_1,\tau_2)\right)^*dW^I(t).
\end{equation*}
\end{prop}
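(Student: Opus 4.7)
The plan is to mimic the strategy of Proposition \ref{DeltaEI} from the independent case, but now starting from the correlation-case energy delta instead of the independent-case one. I first assume that $g$ and $h$ are continuously differentiable with bounded derivatives so that differentiation under the expectation is justified; the general measurable case then follows by the standard density argument already invoked in the preceding propositions. Combining the two terms in the formula for $\Delta_E$ from Proposition \ref{DeltaTempcor} I write compactly
$$\Delta_E = (1+\rho)\,\mathbb{E}^{\mathbb Q}\bigl[g(F^E(\tau_2;\tau_1,\tau_2))\,h(\rho F^E(\tau_2;\tau_1,\tau_2)+\sqrt{1-\rho^2}F^I(\tau_2;\tau_1,\tau_2))\,\pi^{\Delta_E}\bigr].$$
Since $F^E$ does not depend on $F^I(0;\tau_1,\tau_2)$, differentiating in this initial value only touches the inner argument of $h$, and using $\partial F^I(\tau_2;\tau_1,\tau_2)/\partial F^I(0;\tau_1,\tau_2)=Y_I(\tau_2;\tau_1,\tau_2)$ produces
$$\Delta_{EI} = (1+\rho)\sqrt{1-\rho^2}\,\mathbb{E}^{\mathbb Q}\bigl[g(F^E)\,h'(\rho F^E+\sqrt{1-\rho^2}F^I)\,Y_I(\tau_2;\tau_1,\tau_2)\,\pi^{\Delta_E}\bigr].$$

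The next step mirrors the proof of Proposition \ref{DeltaI}. Introducing $Y_I(t;\tau_1,\tau_2)$ as the multiplicative driver of $F^I$ and applying Property $P2$, I obtain $D_tF^I(\tau_2;\tau_1,\tau_2)=Y_I(\tau_2)Y_I^{-1}(t)\sigma_I(t;\tau_1,\tau_2)1_{t<\tau_2}$. Rearranging and multiplying against a weight function $a\in\Upsilon_n$ gives
$$Y_I(\tau_2;\tau_1,\tau_2) = \int_0^{\tau_2} D_tF^I(\tau_2;\tau_1,\tau_2)\,a(t)\,\sigma_I^{-1}(t;\tau_1,\tau_2)\,Y_I(t;\tau_1,\tau_2)\,dt.$$
Substituting this identity and invoking the Malliavin chain rule (Property $P1$), the factor $h'(\rho F^E+\sqrt{1-\rho^2}F^I)D_tF^I$ is recognised as the $F^I$-part of $D_t[h(\rho F^E+\sqrt{1-\rho^2}F^I)]$, so the expression rearranges into $(1+\rho)\sqrt{1-\rho^2}\,\mathbb{E}^{\mathbb Q}\bigl[g(F^E)\pi^{\Delta_E}\int_0^{\tau_2}D_t[h(\cdot)]\,a(t)\sigma_I^{-1}(t)Y_I(t)\,dt\bigr]$.

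The concluding step is the Malliavin integration by parts (Property $P3$) applied with the $\mathcal{F}^{W^E}_{\tau_2}$-measurable factor $g(F^E)\pi^{\Delta_E}\in\mathbb{D}^{1,2}$ and the adapted integrand $a(t)\sigma_I^{-1}(t;\tau_1,\tau_2)Y_I(t;\tau_1,\tau_2)$; by Property $P4$ the resulting Skorohod integral coincides with the It$\hat{\text{o}}$ integral along $W^I$, delivering the advertised weight
$$\pi^{\Delta_{EI}} = \int_0^{\tau_2}a(t)\bigl(\sigma_E^{-1}(t;\tau_1,\tau_2)Y_E(t;\tau_1,\tau_2)\bigr)^{*}dW^E(t)\,\int_0^{\tau_2}a(t)\bigl(\sigma_I^{-1}(t;\tau_1,\tau_2)Y_I(t;\tau_1,\tau_2)\bigr)^{*}dW^I(t).$$
The prefactor then splits as $(1+\rho)\sqrt{1-\rho^2}=\sqrt{1-\rho^2}+\rho\sqrt{1-\rho^2}$, yielding the two terms displayed in the statement. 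The step I expect to be the main obstacle is the careful bookkeeping of the Malliavin chain rule on the composite argument $\rho F^E+\sqrt{1-\rho^2}F^I$: one must verify that the $F^E$-part of the chain rule does not contribute spurious terms in $dW^E$, which is achieved via the decomposition $W^I=\rho W^E+\sqrt{1-\rho^2}\widetilde{W}^I$ together with the fact that $g(F^E)\pi^{\Delta_E}$ is already measurable with respect to the $W^E$-filtration. The final density extension from $C^1_b$ to general $g,h\in L^2$ is routine and handled exactly as in Proposition \ref{DeltaEI}.
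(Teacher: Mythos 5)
Your argument is essentially the proof the paper intends: the paper's own proof consists of the single remark that the result ``follows the same line of argument as in the proof of Proposition \ref{DeltaEI}'', and you carry out exactly that programme --- differentiating the correlation-case energy delta of Proposition \ref{DeltaTempcor} in $F^I(0;\tau_1,\tau_2)$, then repeating the first-variation identity, chain rule, and integration-by-parts steps to generate the second stochastic-integral factor, with the correct observation that $g(F^E)\pi^{\Delta_E}$ is $W^E$-measurable so no correction term arises from Property $P5$. The only discrepancy is cosmetic: your prefactor $(1+\rho)\sqrt{1-\rho^2}=\sqrt{1-\rho^2}+\rho\sqrt{1-\rho^2}$ agrees with the two displayed terms of the statement only if the leading $\sqrt{1-\rho^2}$ there is read as attaching to the first term alone rather than distributing over both, which is evidently what is intended given the $(1+\rho)$ structure of Proposition \ref{DeltaTempcor}.
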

\begin{proof}
The proof follows the same line of argument as in the proof of Proposition
\ref{DeltaEI}. The details are omitted.
\end{proof}
\subsection{The Residual Risk}
If we take the independent delta of energy $\Delta^{Ind}$, say, as the
benchmark value and the correlated case as $\Delta^{Corr}$. Then the residual
risk is determined by the difference between the independent delta of energy and
the correlated case as follows:
$$
|\Delta^{Corr}-\Delta^{Ind}|,
$$
for each $\rho$. The same analysis goes for the cross-gamma formulae.
\section{Examples}
We will provide Malliavin weights in the case where the quanto option payoff functions depend on the terminal value, that is, $\tau_2=T$
\subsection{The independent case}
We consider the following stochastic differential equations to describe the energy price $F^E$ and the temperature price $F^I$ dynamics
\begin{eqnarray}\label{ABC}
\frac{dF^E}{F^E}&=&\sigma_EdW_t^E,~~~~F^E(0)>0\\
\frac{dF^I}{F^I}&=&\sigma_IdW_t^I,~~~~F^I(0)>0,
\end{eqnarray}
where $\sigma_E$, $\sigma_I$ are deterministic volatilities and $W^E$, $W^I$ are independent Brownian motions.
%By It$\hat{\text{o}}$ formula we have
%\begin{eqnarray*}
%F^E&=&F^E(0)\exp\left(-\frac{1}{2}\sigma_E^2T+\sigma_EW_T^E\right)\\
%F^I&=&F^I(0)\exp\left(-\frac{1}{2}\sigma_I^2T+\sigma_IW_T^I\right).
%\end{eqnarray*}
The quanto option pricing formula is then expressed as
\begin{equation}
C_t=\mathbb E^{\mathbb Q}[g(F^E)h(F^I)].
\end{equation}
By using the general formulae developed in the previous sections, we are able to compute analytically the values of different Malliavin weights. Here we set $a(t)=\frac{1}{T}$. We have
\begin{eqnarray*}
\pi^{\Delta_E}&=&\frac{1}{F^E(0)T}\int_0^T\frac{1}{\sigma_E}dW^E(t).
\end{eqnarray*}

\begin{eqnarray*}
\pi^{\Delta_I}&=&\frac{1}{F^I(0)T}\int_0^T\frac{1}{\sigma_I}dW^I(t).
\end{eqnarray*}
\begin{eqnarray*}
\pi^{\Delta_{EI}}&=&\frac{1}{F^E(0)F^I(0)T^2}\left(\int_0^T\frac{1}{\sigma_E}dW^E(t)\right)\left(\int_0^T\frac{1}{\sigma_I}dW^I(t)\right).
\end{eqnarray*}
\subsection{The correlation case}
Again, we consider the following stochastic differential equations to describe the energy price $F^E$ and the temperature price $F^I$ dynamics
\begin{eqnarray}\label{ABCd}
\frac{dF^E}{F^E}&=&\sigma_EdW_t^E,~~~~F^E(0)>0\\
\frac{dF^I}{F^I}&=&\rho\sigma_IdW_t^I+\sigma_I\sqrt{1-\rho^2}d\widetilde{W}^I,~~~~F^I(0)>0,
\end{eqnarray}
 where $W_t^E$, $W_t^I$ are correlated Brownian motions with correlation parameter $\rho\in(-1,1)$.
 %For all $t\in[0,T]$, we define
%\begin{equation}
%W_t^I=\rho W_t^E+\sqrt{1-\rho^2}\widetilde{W}_t^I
%\end{equation}
%where $\widetilde{W}_t^I$ is a Brownian motion which is independent of $W_t^I$.
The system of stochastic differential equations can be written in a matrix form
$$
\left(
  \begin{array}{c}
    dF^E \\
    dF^I \\
  \end{array}
\right)
=\left(
   \begin{array}{cc}
     \sigma_EF^E & 0 \\
     \rho\sigma_IF^I & \sigma_I\sqrt{1-\rho^2}F^I \\
   \end{array}
 \right)
 \left(
   \begin{array}{c}
     dW^E \\
     d\widetilde{W}^I\\
   \end{array}
 \right)\,.
 $$
 The inverse matrix of
 $$
\left(
   \begin{array}{cc}
     \sigma_EF^E & 0 \\
     \rho\sigma_IF^I & \sigma_I\sqrt{1-\rho^2}F^I \\
   \end{array}
  \right)\,
 $$
 is calculated as
  $$
\frac{1}{\sigma_E\sigma_I\sqrt{1-\rho^2}F^EF^I}\left(
   \begin{array}{cc}
     \sigma_I\sqrt{1-\rho^2}F^I & 0 \\
     -\rho\sigma_IF^I & \sigma_EF^E \\
   \end{array}
  \right)\,=\left(
   \begin{array}{cc}
     \frac{1}{\sigma_EF^E} & 0 \\
     -\frac{\rho}{\sigma_E\sqrt{1-\rho^2}F^E} & \frac{1}{\sigma_I\sqrt{1-\rho^2}F^I} \\
   \end{array}
  \right)\,.
 $$
 The quanto option pricing formula, in this setting, is given by
 \begin{equation}
 C_t=\mathbb E^{\mathbb Q}[g(F^E)h(\rho F^E +\sqrt{1-\rho^2}F^I)].
 \end{equation}
 By using the general formulae developed in the previous sections, we are able to compute analytically the values of different Malliavin weights. Here we set $a(t)=\frac{1}{T}$. We have
\begin{eqnarray*}
\pi^{\Delta_E}&=&\frac{1}{F^E(0)T}\int_0^T\frac{1}{\sigma_E}dW^E(t)-\frac{1}{F^E(0)T}\int_0^T\frac{\rho}{\sigma_E\sqrt{1-\rho^2}}d\widetilde{W}^I(t).
\end{eqnarray*}

\begin{eqnarray*}
\pi^{\Delta_I}&=&\frac{1}{F^I(0)T}\int_0^T\frac{1}{\sigma_I\sqrt{1-\rho^2}}d\widetilde{W}^I(t).
\end{eqnarray*}
\begin{eqnarray*}
\pi^{\Delta_{EI}}&=&\frac{1}{F^E(0)F^I(0)T^2}\left(\int_0^T\frac{1}{\sigma_E}dW^E(t)-\int_0^T\frac{\rho}{\sigma_E\sqrt{1-\rho^2}}d\widetilde{W}^I(t)\right)\left( \int_0^T\frac{1}{\sigma_I\sqrt{1-\rho^2}}d\widetilde{W}^I(t)\right)\\
&-&
\frac{1}{F^E(0)F^I(0)T^2}\int_0^T\frac{\rho}{\sigma_E\sigma_I(1-\rho^2)}dt.
\end{eqnarray*}
\section{Concluding remarks}
In this paper, we have derived the delta, the cross-gamma expectation formulae of the quanto energy option
written on a forward contract under the HJM framework. We have considered the
independent and the correlation cases to facilitate the residual risk analysis. The results are an extension of the work in Benth \emph{et al.} \cite{BenthLange} as they accommodate discontinuous payoff functions. It will be interesting to consider the case with stochastic volatility with a positive L\'evy processes when the market is incomplete. In Benth \emph{et al.} \cite{BenthGroth}, the authors analysed such a volatility model for a different payoff structure to the one considered in this paper. This will be considered in future research.
\subsection*{Acknowledgment}
The authors would like to thank Prof Fred E. Benth for discussing the modelling issues in
this paper. The work of R. K. was supported in part by the National Research Foundation of South Africa (Project No. 90313). The work of F. J. M. was supported in part by the National Research Foundation of South Africa (Grant Number: 105924).

 \end{document}